\newcommand{\F}{\mathbb{F}}             
\newcommand{\xor}{\oplus}               
\newcommand{\dmin}{d_\text{min}}        
\newcommand{\V}[1]{\mathbf{#1}}         
\newcommand{\PM}{\begin{pmatrix}}       
\newcommand{\MP}{\end{pmatrix}}
\newcommand{\BE}{\begin{equation}}      
\newcommand{\EE}{\end{equation}}
\newcommand{\BEA}{\begin{eqnarray*}}    
\newcommand{\EEA}{\end{eqnarray*}}
\newcommand{\BDEF}{\begin{definition}}  
\newcommand{\EDEF}{\end{definition}}
\newcommand{\eg}{{e.g.\ }}
\newcommand{\ie}{{\it i.e.\ }}
\newcommand{\cf}{{\it cf.\ }}
\newcommand{\QED}{\hfill $\Box$}
 \newcommand{\delpar}[1]{}
 \newcommand{\delw}[1]{}
\definecolor{bgnd}{gray}{.4}
\newcommand{\delpar}[1]{\

\colorbox{bgnd}{\parbox{\textwidth}{#1}}\newline}             
\newcommand{\delw}[1]{\colorbox{bgnd}{#1}}                    
\begin{document}
%
\title{On the Duality of Probing and Fault Attacks}
\author{%
\ifHIDE
  Berndt M. Gammel and Stefan Mangard
\else
  \emph{anonymous authors}\newline
\fi%
}
\ifHIDE
 \institute{Infineon Technologies AG \\
            Munich, Germany\\[1em]
            Feb. 12, 2009\\[1em]
            \email{Berndt.Gammel@infineon.com}\\
            \email{Stefan.Mangard@infineon.com}}
\else
 \institute{\quad\vspace*{8em}}
\fi
\maketitle
%

\begin{abstract}
In this work we investigate the problem of simultaneous privacy and
integrity protection in cryptographic circuits. We consider a
white-box scenario with a powerful, yet limited attacker. A concise
metric for the level of probing and fault security is introduced,
which is directly related to the capabilities of a realistic
attacker. In order to investigate the interrelation of probing and
fault security we introduce a common mathematical framework based on
the formalism of information and coding theory. The framework
unifies the known linear masking schemes. We proof a central theorem
about the properties of linear codes which leads to optimal secret
sharing schemes. These schemes provide the lower bound for the
number of masks needed to counteract an attacker with a given
strength. The new formalism reveals an intriguing \emph{duality
principle} between the problems of probing and fault security, and
provides a unified view on privacy and integrity protection using
error detecting codes. Finally, we introduce a new class of linear
tamper-resistant codes. These are eligible to preserve security
against an attacker mounting simultaneous probing and fault attacks.

\vspace*{1em}
\textbf{Keywords.}
  probing attacks, fault attacks, side channel attacks,
  coding theory, secret sharing, secure computation.
\end{abstract}

\section{Introduction}   \label{Sec:Introduction}

In the traditional cryptographic setting it is assumed that
the adversary has only \emph{black-box} access to the cryptographic algorithm.
He can query the apparatus executing
the algorithm with inputs of his choice and
observe the answer (chosen plain text scenario).
The secret key has been loaded into
the device in the outset and is not accessible to the adversary.
A further basic assumption is that the attacker has full knowledge of
the algorithm---he can build a model of the device
and query it based on his guess for the secret.

In practice, however, the black-box assumption is realized rarely.
This applies particularly to a device which can be physically
accessed by the attacker during operation. Present-day society
heavily relies on the security of such cryptographic devices.
Examples are electronic ID cards, smart cards for payment purposes,
mobile phones, network devices, and personal computers. All of these
can be easily seized by an attacker. In this setting both the
\emph{privacy} and the \emph{integrity} assumption inherent in the
black-box model do not hold any more:

Firstly, due to the physical nature of computation there is always
some amount of information leakage from intermediate stages of the computation.
Secondly, information from intermediate stages can be actively probed
without disturbing the computation.
Thirdly, the physical processes taking place during computation
can be actively disturbed in order to induce faulty intermediate values.
These three fundamental physical constraints on cryptographic computation
are known as
the \emph{side-channel} (SCA),
the \emph{probing} (PRA), and
the \emph{fault} (FA)
attack scenario, respectively.

There are many physical sources for side-channel information
leakage during the execution of an algorithm
which is otherwise secure under the black-box assumption.
Especially the analysis of the information leakage on
the power line of a physical device has had considerable
impact on the art of secure cryptographic algorithm implementation
during the last decade~\cite{Kocher1999DifferentialPowerAnalysis}.
Differential Power Analysis (DPA) represents an actual threat
for commonly used cryptographic devices, because the
work factor for a DPA attack is comparably moderate
with respect to the equipment, the skill, and the computing power
\cite{Chari1999TowardsSoundApproaches,%
      Mangard2007PowerAnalysisAttacks,%
      Messerges2000SecuringtheAES}.

Probing attacks can be considered as even more powerful
than SCA attacks, because the attacker
monitors a \emph{local} physical value (e.g. a voltage level on some wire),
which is directly related to a data value.
There are many sophisticated probing techniques available
ranging from the placement of needles to optical probing methods
\cite{Anderson1996TamperResistance-,%
      Boit2008PhysicalTechniquesChip-Backside}.
Throughout this paper we will use the term \emph{probe}
for any method that allows to record
a local value in a computation.

Both SCA and PRA attacks are highly efficient if they are used
in the differential setup. Here, a few carefully chosen local values
are probed and traces for several runs of the algorithm
with different inputs are recorded.
The analysis of the collected data values can already reveal
the whole key or at least reduce the key space
to a level that allows a brute force attack.
Theoretical results on highly efficient differential probing attacks (DPRA) on
private key and public key cryptosystems have recently been reported
\cite{Handschuh1999ProbingAttackson,%
      Schmidt2009ProbingAttackAES}.

The third class of physical attacks, fault attacks, is
particularly interesting, because the violation of integrity
can be exploited to break privacy
\cite{Anderson1996TamperResistance-,%
      Skorobogatov2003OpticalFaultInduction}.
It has been demonstrated that the injection of a small
number of a specific kind of faults can be used to break
private key
\cite{Biham1997DifferentialFaultAnalysis}
and public key
\cite{Boneh2001ImportanceofEliminating}
cryptosystems.

Clearly, there is a tireless quest for countermeasures against all
of these kinds of attacks. As physical security seems to be out of
reach (in the sense that anything can be proven) countermeasures
based on mathematical reasoning are of particular interest. However,
in the field of probing attacks the impossibility of obfuscation
\cite{Barak2001Impossibilityofobfuscating} rules out the security
against an all-powerful attacker. Hence, the capabilities of the
attacker have to be restricted by model assumptions
\cite{Micali2004PhysicallyObservableCryptography}. \emph{Secret
sharing} schemes have been proposed to provide privacy against an
attacker who is limited to place at most $q$ needles
\cite{Ishai2003PrivateCircuits:Securing}. In the context of DPA
secret sharing schemes have been introduced as \emph{data masking}.
For example, in a simple masking scheme with two shares a data value
$x$ is lifted to the pair of values $(x \xor m, m)$, where the mask
value $m$ has the properties of a uniformly distributed random
variable. Obviously the pair $(x \xor m, m)$ is secure against
probing with one needle.

The detection (and correction) of faults has a
long history in coding theory
\cite{MacWilliams2006TheoryofError-Correcting,%
      Shannon1949CommunicationTheoryof}.
Trivially, the capability to detect errors requires the
introduction of information redundancy.
An appropriate error detection code can be devised depending on
the kind of errors the attacker is able to inject.
If, for example, it can be assumed that an attacker is able to
flip at most $f$ bits in a memory word, and the attack
should be detected with certainty, an error detection
code with minimum distance $f+1$ could be used.
On the other hand
error detecting codes have also be used for the construction of
secret sharing schemes~\cite{Massey1993MinimalCodewordsand}.

The crucial challenge is, however,
to provide security against an attacker of limited, but considerable power,
who is able to perform both probing and fault attacks.
First theoretical foundations for such \emph{tamper-resistant} devices
have been laid in recent works
\cite{Gennaro2004AlgorithmicTamperProof,%
      Ishai2006PrivateCircuitsII}.
Both, secret sharing schemes and error detection codes introduce
redundancy into the realization of the cryptographic device.
However, little is known about the interrelation between simultaneous probing and
fault resistance countermeasures.

This work is organized in four sections.
In Section~\ref{Sec:Preliminaries} we define a concise metric
for the level of probing security,
which is directly related to the capabilities of a realistic attacker.
In order to investigate the interrelation of probing security
and fault security we introduce a common mathematical language
within the framework of information and coding theory.
In Section~\ref{Sec:OPS} we prove a theorem about the
properties of codes which can be used for the construction
of optimal masking schemes.
In particular, we describe linear codes which
are optimal with respect to the number of introduced masks (OPS codes).
The new formalism reveals an intriguing \emph{duality principle}
between the problems of probing and fault security.
In Section~\ref{Sec:Leakage} we compare the
information leakage of optimal privacy preserving codes and
classical masking schemes.
Finally, in Section~\ref{Sec:OTR} we fuse privacy and integrity protection
and introduce a new class of
optimal tamper-resistant codes (OTR codes),
which are eligible to preserve security against an attacker mounting
simultaneous probing and fault attacks.

\section{Preliminaries}  \label{Sec:Preliminaries}
It is suggestive to describe an arbitrary apparatus used to
perform a cryptographic computation in terms of an (electronic) switching circuit.
Then the collection of interconnects between the switching elements,
the \emph{wires} $x_1, x_2, \dots, x_k$,
carry the complete intermediary state information.
Each wire $x_i$ transports an information signal $x_{i}(t)$
as a function of time.
To simplify notation, we consider only
discrete evaluation cycles in time, $t=1,2,\dots, T$,
and binary signal values on the wires throughout this paper.
This model fits CMOS circuit technology already very well,
which is today the dominating technology for
the implementation of electronic (cryptographic) devices.
Hence we can express all $x_{it} = x_{i}(t)$ by elements of the binary field $\F_2$.
A generalization to $n$-ary circuit logic is immediate.
Also the problem of probing analogue signals can be described in
the presented formalism by quantizing and mapping the continuum
of analogue values to an appropriate number of discrete values.
We are now ready to define probing attacks
given an adversary of quantifiable strength. All the definitions are in
accordance with the definitions of differential cryptanalysis,
differential fault attacks, and differential power analysis (of order $q$).

\BDEF
In a \textbf{Probing Attack of Order q, PRA(q),} an adversary
is capable of obtaining the values $(x_{1,t}, x_{2,t}, \dots, x_{q,t})$
on $q$ wires of his choice in a circuit for an arbitrary number of
evaluation cycles $t = 1, 2, \dots, T$.
\label{Def:PRA}
\EDEF

\BDEF
In a \textbf{Differential Probing Attack of Order q, DPRA(q),}
an adversary is capable of obtaining the values
$\V x_t = (x_{1,t}, x_{2,t},\dots, x_{q,t})$ on $q$ wires of a circuit for
an arbitrary number of evaluation cycles $t = 1, 2, \dots, T$.
Furthermore, $\V x_t$ is related to some known information
(\eg the cipher text) and a secret $\V k$
via a set of $\lambda$ equations
$f_i(\V x_t, \V c_t, \V k) = 0$, $1 \le i \le\lambda$.
By collecting multiple different pairs
$(\V x_t, \V c_t)$ and evaluating these equations
the adversary determines some or all bits of the secret $\V k$.
\label{Def:DPRA}
\EDEF

It should be noted that DPRA(q) attacks can be highly efficient.
DPRA(q) works in the cipher text only scenario
and is even more efficient in a known plain text setting.
Probing one bit at a carefully selected position
and collecting the values of a few encryptions can already reveal
the whole key or at least reduce the key space to a
work factor that allows a brute force attack.
Examples for efficient DPRA(1) and DPRA(3) attacks on the AES
are given in~\cite{Schmidt2009ProbingAttackAES}:
The information collected from a single probe during 168 encryptions
reveals the secret key.
In a known plain text setting three probes and 26 cipher texts are already sufficient.
Less efficient attacks on DES, RC5, and public cryptosystems have already been described
in an early work~\cite{Handschuh1999ProbingAttackson}.

Let us now formally consider the state of the $k$ wires of the circuit
at time $t$ as a message word
\begin{align*}
   \V x_t = (x_{1,t}, x_{2,t}, \dots, x_{k,t}) \in \mathcal X = \F_2^k.
\end{align*}
in some message space $\mathcal X$.
If an attacker has access to a small, but carefully selected set of $q$
coordinates of the message word over some period of time,
he will generally be able to extract the secret in a DPRA(q) attack.
Privacy can be preserved if the message is augmented by a
number of $s$ masks
\begin{align*}
   \V m_t = (m_{1,t}, m_{2,t}, \dots, m_{s,t}) \in \mathcal M = \F_2^s.
\end{align*}
\BDEF\label{Def:Mask}
A \textbf{mask} $m_{it}$ is defined to be a value (on a wire) which can be described as
an independent and uniformly distributed binary random variable.
\EDEF
In practical circuit designs a balanced i.i.d. sequence of mask bits
could be generated by a random bit stream generator (RBG).
This sequence is routed on a wire to a
destination circuit element in which the mask bits
are finally
combined with message bits.
The crucial point is the setup of an optimal \emph{masking scheme}
against an adversary with given probing capabilities.
A masking scheme describes the way the vector of masks $m_t$
is combined with the message vector $x_t$ in each evaluation cycle.
In the following we show that the problem of finding an optimal masking scheme
can be expressed as a channel coding problem.

In general we have an encoding function $g$,
which is a map
\begin{align*}
   g: \mathcal X \times \mathcal M &\rightarrow \mathcal Y \subseteq \F_2^N\\
                                (\V x, \V m)&\mapsto \V y = (y_1,y_2,\dots,y_N)
\end{align*}
with $n = s+k$,  $N \ge n$.
The dimension of the image is equal to $n$, because we must be able to
decode the message and we assume that every mask is used.
If no redundancy for integrity protection is introduced we have $N = n$.
In the next sections we consider pure masking schemes ($N = n$).
Finally, in section~\ref{Sec:OTR} we will introduce
tamper-resistant codes which are both, capable of
preserving privacy and integrity ($N>n$).

The following definition is crucial:
%
\BDEF\label{Def:ProbingSecurity}
A circuit is \textbf{probing secure of order q}, we write \textbf{PS(q)},
if for each choice of
indices $i_1,i_2,\dots,i_q$ with $1\le i_1 < i_2 < \dots < i_q \le n$
the condition
\[
    I(X_{1},X_{2},\dots,X_{k}; Y_{i_1},Y_{i_2},\dots,Y_{i_q}) = 0
\]
on the mutual information~\cite{Cover2006ElementsofInformation} holds,
where the message $X_i$ and the masked message $Y_i$ are
represented by discrete random variables
with $p(x_i) = \mathrm{Pr}\{X_i=x_{i,t}\}$
and  $p(y_i) = \mathrm{Pr}\{Y_i=y_{i,t}\}$,
respectively, at each point in time $t$.
\EDEF
In other words, a circuit is PS(q), if it does not leak any information
on the message bits $x_i $ to an attacker,
who can simultaneously probe $q$ wires $y_i$ of his choice
over an arbitrary period of time.
If this condition holds the attack will fail regardless of
whether a simple or differential probing setup is used
according to Definition \ref{Def:PRA} or \ref{Def:DPRA}, respectively.
It should be noted, that Definition~\ref{Def:ProbingSecurity}
is a natural generalization of the
notion of correlation-immunity of a Boolean function, introduced
by Siegenthaler~\cite{Siegenthaler1984Correlation-immunityofnonlinear}.
This definition is also in accordance with the definition
of a power analysis of order $q$, \cf\cite{Mangard2007PowerAnalysisAttacks}.

At this point a natural question arises:
\begin{center}
\begin{minipage}{0.83\textwidth}
 \em What is the lower bound for the number of masks needed to protect
     a circuit with $k$ wires against an adversary who is able to
     mount a probing attack of order $q$?
 \end{minipage}
\end{center}
This information is important for the design of privacy preserving masked
circuits, because \emph{circuit size increases strongly with the number of masks}.
Luckily, experience from physical failure analysis shows that in recent
IC technologies access to single wires
becomes painfully difficult~\cite{Boit2008PhysicalTechniquesChip-Backside}.
Hence the relation between the number of probes and the work factor
of the attack is also a strongly increasing function.
Therefore, in order to design an optimal privacy preserving circuit for a work factor
which is commensurate with the protection period and value of the secret,
it is necessary to know the lower bound for the number of masks.
If the optimal number is known, this value can be used as a target or benchmark
for designs of cryptographic circuits.
The construction of the masking schemes presented in the next section
is based on linear block codes.
These linear schemes are optimal in the sense that they require the
smallest number of masks for a given number of information bits.

\section{Optimal linear masking schemes, OPS-Codes}\label{Sec:OPS}
We shall use the following notation:
$\V G_{ij}$ is a $i\times j$ matrix over $\F_2$,
$\V I_k$ is the $k\times k$ unit matrix,
$\V O_{ij}$ is the $i\times j$ zero matrix,
and $\V 1_k$ is the row vector of $k$ ones.
Furthermore, $\V x^T$ is the transposed vector (matrix) of $\V x$.
The message augmented with the vector of masks is denoted by the row vector
$\V u = (\V x , \V m)$.
We can express any linear masking scheme by
\[
    \V y = \V u \V G,
\]
where $\V G$ is the generator matrix of the code.
As usual the parity check matrix is defined by
\[
    \V H \V y^T = 0,
\]
and we have the relations
\BE
\V H = (\V Q | \V I), \quad \V G = (\V I | \V Q^T)
\label{Eq:GenParRelation}
\EE
choosing the systematic form of the code.

We now define the canonical form of the code for a linear masking scheme.
%
\begin{proposition}
The $n\times n$ generator matrix $\V G$
for a masking scheme with $s$ masks and $n-s$ data values
can be written in the following canonical form:
\BE\label{Eq:ProbingMatrix}
 \V G =
 \left(\begin{array}{c}
  \V I_{n-s} | \V O_{n-s,s} \\
  \hline
  \V P^q_{s,n}
 \end{array}\right)
              \quad
              \textrm{with}
              \quad
 \V P^q_{s,n} = (\V Q_{s,n-s} | \V I_{s}).
\EE
\end{proposition}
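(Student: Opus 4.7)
The plan is to start from an arbitrary valid generator matrix $\V G$ and exhibit a sequence of scheme-preserving transformations that drive it into the asserted canonical form. A first observation is that decodability forces the kernel of $\V G$ to lie within the mask subspace $\{0\}\times\F_2^s$, while the assumption that every mask is effectively used forces the mask block $\V G_2$ to have full row rank $s$; together these imply that $\V G$ is invertible. Denoting by $\V G_1$ the top $n-s$ data rows and by $\V G_2$ the bottom $s$ mask rows, both blocks therefore have maximal row rank.

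I would then identify three operations that preserve the scheme. (i)~A permutation of the $n$ output columns only relabels wires, and because probing security quantifies over all $q$-subsets of wires, it preserves every mutual-information quantity of interest. (ii)~Left-multiplying $\V G_2$ by an invertible $s\times s$ matrix $\V S$ is equivalent to applying an invertible linear change of basis to the mask vector $\V m$, and any such bijection of a uniform i.i.d.\ vector is still uniform i.i.d. (iii)~Replacing $\V G_1$ by $\V G_1 + \V R\V G_2$ for any matrix $\V R$ does change the wire-level encoding, yet it preserves the full joint distribution of $(\V X,\V Y)$: for each fixed $\V x$, the output $\V y = \V x\V G_1 + \V m\V G_2$ is uniform on the coset $\V x\V G_1 + \mathrm{rowsp}(\V G_2)$, and the added contribution $\V x\V R\V G_2$ lies inside $\mathrm{rowsp}(\V G_2)$, so the coset, and hence the conditional distribution of $\V y$ given $\V x$, is unchanged. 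Step~(iii) is the only subtle point; intuitively, a codeword of the mask subcode may be absorbed into any given encoding without altering the encoded message.

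Given (i)--(iii), the reduction is mechanical. Since $\V G_2$ has rank $s$, a suitable column permutation via (i) places an invertible $s\times s$ block $\V D$ in the lower-right corner, and operation (ii) with $\V S=\V D$ turns the mask block into $(\V Q\mid\V I_s)$. Writing the data block at this stage as $(\V A\mid\V B)$, operation (iii) with $\V R=\V B$ cancels the top-right corner, leaving a block-triangular matrix with $\V A'=\V A+\V B\V Q$ in the upper-left and $\V I_s$ in the lower-right. A Schur-complement calculation gives $\det\V G = \det\V A'$, so $\V A'$ is invertible, and a final bijective data relabelling $\V x\mapsto\V x\V A'$ replaces $\V A'$ by $\V I_{n-s}$, yielding the stated canonical form. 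The principal obstacle is the justification of step~(iii); everything else parallels the classical reduction of a generator matrix to systematic form.
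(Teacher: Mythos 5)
Your proposal is correct and follows essentially the same route as the paper, which simply invokes elementary row operations and column interchanges to bring $\V G$ into the canonical form of an equivalent code; your version additionally justifies why each operation (wire relabelling, invertible change of mask basis, absorbing mask-row multiples into data rows, and data relabelling) preserves the masking scheme's joint distribution, which the paper leaves implicit. The only nit is that to normalise the lower-right block $\V D$ to $\V I_s$ you should take $\V S=\V D^{-1}$ rather than $\V S=\V D$.
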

\begin{proof}
Using elementary row operations and column interchanges the
generator matrix of a code can be converted into that of
an equivalent code in the given canonical form. \QED
\end{proof}
In the canonical form the first $k = n-s$
coordinates $y_i$ of a code word $\V y$ are
given by the corresponding message bits plus a linear combination of masks,
$y_i = x_i + \sum_{j=1}^{s} m_j Q_{ji}, 1 \le i \le k$.
The last $s$ coordinates are single mask bits.
\BDEF
We shall call $\V P^q_{s,n}$ the \textbf{probing matrix of order q}.
\EDEF
It has $s$ rows (number of masks) and $n=k+s$ columns
(number of masks plus data bits). The probing security of the
corresponding masking scheme is indicated by the superscript $q$.
We now provide some examples.

The classical one-time pad (OTP) encryption scheme of
Gilbert Vernam~\cite{Vernam1919SecretSignalingSystem}
follows in a natural way as one special case.
Every bit of the message is masked with an individual key stream (mask) bit.
OTP provides perfect secrecy under the black-box assumption for the masks.
However, in the setting of physically observable computation the masks
are also subject to the $q$-th order probing attack.
In our new formalism Vernam's masking scheme is defined by the probing matrix
\[
    \V P^1_{k,2k} = (\V I_k | \V I_k).
\]
Obviously the OTP scheme is only PS(1).
An attacker with two needles is able to probe a masked value
and the corresponding mask bit and can therefore compromise the
security of the masking scheme. With one needle no attack is possible.
It is important to point out that there are other PS(1) masking schemes.
One mask is already sufficient to protect an arbitrary
number of $k$ information bits against a PA(1) attack.
The corresponding canonical probing matrix is
\[
    \V P^1_{1,k} = (\V 1_k | 1).
\]
We note that this matrix is identical to the H-Matrix of a single
parity check code for $k$ data bits.
Another trivial case is the construction of the PS(q)
masking scheme for maximum probing security of one data bit.
Clearly we need $s = q$ masks to expand one information bit to $q+1$ shares.
The corresponding canonical probing matrix
\[
    \V P^q_{q,q+1} = (\V 1_q^T | \V I_q)
\]
is the H-Matrix of a repetition code.
Inspired by these observations we can now formulate one of our main results.
%
\begin{theorem}\label{Th:ProbingMatrix}
A linear masking scheme is probing secure of order $q$,
if and only if the probing matrix $\V P^q_{s,n}$
has the property that any $q$ columns are linearly independent.
\end{theorem}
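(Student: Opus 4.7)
The plan is to fix an arbitrary probing set $S = \{i_1, \ldots, i_q\}$, split it as $S = S_1 \cup S_2$ with $S_1 \subseteq \{1,\ldots,n-s\}$ and $S_2 \subseteq \{n-s+1,\ldots,n\}$, and form the $n \times q$ submatrix $\V G_S$ of $\V G$ consisting of the columns indexed by $S$. By the canonical form~(\ref{Eq:ProbingMatrix}), the top $(n-s) \times q$ block $\V A$ of $\V G_S$ has the standard basis vector $\V e_{i_r}$ of $\F_2^{n-s}$ as its $r$-th column whenever $i_r \in S_1$ and a zero column whenever $i_r \in S_2$, while its bottom $s \times q$ block $\V B$ consists of the $q$ columns of $\V P^q_{s,n}$ indexed by $S$, whose $S_2$-columns are distinct standard basis vectors of $\F_2^s$. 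The probed vector can then be written as $\V Y_S = \V X \V A + \V M \V B$, where $\V M$ is uniform on $\F_2^s$ by Definition~\ref{Def:Mask}, so the mutual information depends entirely on how $\V M \V B$ randomises the $\V X$-dependent term $\V X \V A$.

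For the sufficient direction, I would assume that every $q$ columns of $\V P^q_{s,n}$ are linearly independent; then $\V B$ has full column rank $q$, the linear map $\V m \mapsto \V m \V B$ is surjective onto $\F_2^q$, and $\V M \V B$ is uniformly distributed on $\F_2^q$. Consequently $\V Y_S$ conditional on $\V X = \V x$ is uniform for every $\V x$, so $\V Y_S$ is independent of $\V X$ and the mutual information in Definition~\ref{Def:ProbingSecurity} vanishes, yielding PS($q$).

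For the necessary direction, I would assume that for some $S$ the block $\V B$ has rank strictly less than $q$, and pick a non-zero $\V w \in \F_2^q$ with $\V B \V w^T = 0$. I expect the main obstacle, and the place where the canonical form of $\V G$ is used essentially, to be showing that this dependence among columns of $\V B$ leaks information about $\V X$ rather than merely about masks. The key step is that $\V w$ must have a non-zero entry at some index $r$ with $i_r \in S_1$: otherwise $\V w$ would be supported purely on $S_2$-coordinates, and $\V B \V w^T$ would be a non-trivial sum of distinct standard basis vectors of $\F_2^s$, contradicting $\V B \V w^T = 0$. Hence $\V A \V w^T \in \F_2^{n-s}$ is non-zero, the probe combination $\V Y_S \V w^T$ equals the non-constant linear function $\V X \V A \V w^T$ of the message bits, and this forces $I(\V X; \V Y_S) > 0$, so PS($q$) fails.
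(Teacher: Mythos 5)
Your proof is correct, and the two directions relate to the paper's own argument in different ways. The necessity direction is essentially the paper's: a linear dependency among the selected columns of $\V P^q_{s,n}$ yields a mask-free linear combination of probed wires equal to a linear combination of message bits. You are in fact slightly more careful than the paper here: you explicitly verify that the kernel vector $\V w$ must have support meeting the data positions $S_1$ (because the $S_2$-columns of $\V B$ are distinct standard basis vectors of $\F_2^s$ and hence independent), so that $\V X \V A \V w^T$ is genuinely a non-empty sum of message bits; the paper asserts this without comment. The sufficiency direction uses the same key linear-algebra fact --- that rank $q$ of the selected columns makes $\V m \mapsto \V m \V B$ surjective with kernel of dimension $s-q$ --- but packages it differently. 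The paper builds the full $2^n\times 2n$ truth table $[\V u, \V u\V G]$, counts all-zero rows of a submatrix, and runs a coset argument to show the \emph{joint} distribution of $(\V X, \V Y_S)$ is uniform on $\F_2^{k+q}$, then computes the mutual information explicitly. You instead show the \emph{conditional} distribution of $\V Y_S$ given $\V X=\V x$ is uniform for every $\V x$, which gives independence directly. Your route is shorter, avoids the coset machinery, and is marginally more general: it does not require $\V X$ to be uniformly distributed (the paper's computation implicitly assumes $p(\V x)=2^{-k}$), only that $\V M$ is uniform and independent of $\V X$ as guaranteed by Definition~\ref{Def:Mask}.
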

\begin{proof}
Let us denote the $i$th column of the probing matrix
by $P_i$ so that $\V P_{s,n}^q =(P_1,P_2,\dotsc,P_n)$.

To show the necessity of the condition,
assume to the contrary that there are $h$ columns of
$\V P_{s,n}^q$ with $h \leq q$ such that $P_{i_1}+\dotsb +P_{i_h} = \V 0$.
By summing up the corresponding
$y_{i_1},\dotsc,y_{i_h}$
all involved mask bits  will cancel out resulting in
$y_{i_1}+ \dotsb +y_{i_h} = x_{j_1}+\dotsb +x_{j_b}$.
It follows that $I(x_1,\dotsc,x_k; y_{i_1}+\dotsb +y_{i_h})=1$
which implies that $I(x_1,\dotsc,x_k;y_{i_1},\dotsc,y_{i_h})$
cannot be zero, and the masking scheme cannot be
probing secure of order $q$ according to Definition~\ref{Def:ProbingSecurity}.

To show the sufficiency we consider the table of function
values of the function $f:\mathbf{u} \in
\mathbb{F}_2^n \mapsto \V u \V G \in \F_2^n$.
This is a $2^n \times 2n$ matrix denoted by
$[\V u, \V u \V G]$ where $\mathbf{u}$
runs through the elements of $\mathbb{F}_2^n$.
Recall that $\mathbf{u}$ has the form $\mathbf{u} = (\mathbf{x},\mathbf{m})$,
where
$\mathbf{x}=(x_1,\dotsc,x_k)$ is the vector of all $k$ information bits, and
$\mathbf{m}=(m_1,\dotsc,m_s)$ is made up by the $s$ masking bits.
Because of the linearity of $f$,
we can add any two rows of $[\V u, \V u \V G]$,
and the sum will be again a row of $[\V u, \V u \V G]$.
In other words, the rows of the matrix $[\V u, \V u \V G]$
constitute an $n$-dimensional vector space over $\F_2$
denoted by $W$ which is a subspace
of $\mathbb{F}_2^{2n}$.

Let $1 \leq i_1 < i_2 < \dotsb < i_q \leq n$ be arbitrary.
Set $[\V u, \V u \V G] = (A_1,\dotsc,A_n;B_1,\dotsc,B_n)$.
For the proof the following $2^n \times (k+q)$ submatrix $\V M$ of
$[\V u, \V u \V G]$ is essential, where
\begin{equation} \label{Eq:SubmatrixM}
\V M = (A_1,\dotsc,A_k; B_{i_1},\dotsc,B_{i_q}).
\end{equation}
We claim that among the $2^n$ rows of $\V M$ there are exactly $2^{s-q}$ all-zero rows.
We will make use of the standard basis vectors of the vector space $\F_2^n$ given by
\[
\mathbf{e}_1 = (1,0,\dotsc,0), \quad \mathbf{e}_2 = (0,1,\dotsc,0),  \quad  \dotsc, \quad
\mathbf{e}_n = (0,0,\dotsc,1).
\]
Consider the $s$ standard basis vectors $\mathbf{e}_{k+1}, \mathbf{e}_{k+2}, \dotsc,
\mathbf{e}_{k+s}$. The $2^s$ linear combinations (over $\F_2$) of these vectors will
produce \emph{all} row vectors of $\mathbb{F}_2^n$ whose first $k$ coordinates are zero.

Let $V \subset \mathbb{F}_2^n$ be the vector space spanned by
$\mathbf{e}_{k+1},\dotsc,\mathbf{e}_{k+s}$.
Clearly, $\dim(V)= s$.
By hypothesis, the $q$ columns
$P_{i_1},\dotsc,P_{i_q}$ of the probing matrix $\V P_{s,n}^q$ are linearly independent.
This implies that the matrix
$B=(B_{i_1},\dotsc,B_{i_q})$
has rank $q$.
Consider the linear mapping
\[
\phi : \mathbf{v} \in V \mapsto \mathbf{v}B \in \mathbb{F}_2^q.
\]
By elementary linear algebra,
\[
\dim(V) = \dim(\operatorname{ker}(\phi)) + \dim(\operatorname{Im}(\phi)).
\]
Since $\dim(V)=s$, and $\dim(\operatorname{Im}(\phi)) = \operatorname{rank}(B)=q$,
we conclude that
$\dim(\operatorname{ker}(\phi))=s-q$.
Thus there are $2^{s-q}$ vectors $\mathbf{v} \in V$ for which
$\mathbf{v}B = \mathbf{0}$.
It follows that the matrix $\V M$ in \eqref{Eq:SubmatrixM}
has $2^{s-q}$ all-zero rows.

It is now easy to see, that the set $U$ of all vectors of $W$,
which are zero in positions
$1,\dotsc,k, i_1,\dotsc,i_q$,
forms a $(s-q)$-dimensional subspace of the $n$-dimensional vector  space $W$.
Consider the cosets $w+U$, $w\in W$, of the subspace $U$ in $W$.
The distinct cosets of $U$ have all the same cardinality,
namely $2^{s-q}$, the cardinality of $U$. This implies that
among the $2^n$ rows of the matrix $\V M$ in \eqref{Eq:SubmatrixM}
each of the vectors of $\mathbb{F}_2^{k+q}$
occurs exactly $2^{s-q}$ times or, equivalently,
with probability $2^{s-q-n}$. Thus, for
$\mathbf{x}=(x_1,\dotsc,x_k)$ and $\mathbf{z}=(y_{i_1},\dotsc,y_{i_q})$,
we have $p( \mathbf{x},
\mathbf{z} ) = 2^{s-q-n}$, $p(\mathbf{x}) = 2^{-k}$, and $p(\mathbf{z}) = 2^{-q}$.
Since $k=n-s$ it follows that the mutual information,
\cf Definition~\ref{Def:ProbingSecurity},
of $\V x$ and $\V z$ vanishes:
\[
I(\mathbf{x}; \mathbf{z}) = \sum_{\mathbf{x}} \sum_{\mathbf{z}} p(\mathbf{x}, \mathbf{z}) \log_2
\frac{p(\mathbf{x}, \mathbf{z})}{p(\mathbf{x}) p(\mathbf{z})} = 2^{n-s+q} 2^{s-q-n} \log_2
\frac{2^{s-q-n}}{2^{-k} 2^{-q}} = 0.
\]
\QED
\end{proof}
Theorem~\ref{Th:ProbingMatrix} shows that the problem of constructing a
masking scheme for a given order of probing security PS(q) is
equivalent to the problem of constructing a code with given minimum
distance $\dmin = q + 1$. This is immediate, if we recall that the
minimum distance $\dmin$ of a linear code equals the smallest positive
integer $n$ such that there are $n$ columns in the parity check
matrix which are linearly dependent~\cite{MacWilliams2006TheoryofError-Correcting}.

The difference in the case of masking is, that the probing matrix
has to meet the constraints that are usually imposed on the parity check matrix
in the setting of channel coding.
Hence we can state:
\begin{center}
\begin{minipage}{0.83\textwidth}
 \em In the coding theoretical sense
     the problem of preserving privacy in a circuit subject to probing attacks
     is dual to
     the problem of preserving integrity in a circuit subject to fault attacks.
 \end{minipage}
\end{center}
In Section~\ref{Sec:OTR} the duality aspect will be treated further by
considering simultaneous probing and forcing attacks.

Based on Theorem~\ref{Th:ProbingMatrix} we now provide
constructions for codes with different levels of probing security,
which are optimal with respect to the number of masks.
\BDEF\label{Def:OPS}
An \textbf{Optimal Probing Secure Code, OPS(n,k;q),}
is a linear block code of length $n$ and dimension $k$
which provides probing security of order $q$, PS(q),
for $k$ information bits
and has the minimal number $s = n-k$ of mask bits.
\EDEF
By virtue of Theorem~\ref{Th:ProbingMatrix} the existing results
from the field of coding theory can be used for the construction
of good masking schemes for any order of probing security.
The construction of optimal masking schemes for PS(2) and PS(3)
is quite easy:
\begin{itemize}
\item The canonical probing matrix $\V P_{n-k,n}^2$
of the OPS(n,k;2) code is identical
to the parity check matrix of a (shortened)
$[2^s-1,2^s-s-1,3]$ Hamming code.
An example for the OPS(7,4;2) masking scheme can be found in
Appendix~\ref{App:OPS_Example1}.

\item The canonical probing matrix $\V P_{n-k,n}^3$
of the OPS(n,k;3) code is identical
to the parity check matrix of a (shortened)
$[2^{s-1},2^{s-1}-s,4]$ Hsiao code~\cite{Hsiao1970ClassofOptimal}.
An example for the OPS(16,11;3) masking scheme can be found in
Appendix~\ref{App:OPS_Example2}.
\end{itemize}
\begin{table}[h!t]
\centering
\begin{tabular*}{\textwidth}{@{\extracolsep{\fill}}|@{\quad}cc|cccccccccccc@{\quad}|}
\hline
    &    & \multicolumn{12}{c|}{order of probing security $q$} \\
    &    &     1    &    2  &      3 &       4  &      5   &      6 &     7  &     8  &     9  &     10 &     11  &    12 \\
\hline
    & 1  & $\infty$  &       &        &          &          &        &        &        &        &        &         &       \\
    & 2  & $\infty$  &    3  &        &          &          &        &        &        &        &        &         &       \\
    & 3  & $\infty$  &    7  &     4  &          &          &        &        &        &        &        &         &       \\
    & 4  & $\infty$  &   15  &     8  &      5   &          &        &        &        &        &        &         &       \\
    & 5  & $\infty$  &   31  &    16  &      6   &     6    &        &        &        &        &        &         &       \\
$s$ & 6  & $\infty$  &   63  &    32  &      8   &     7    &     7  &        &        &        &        &         &       \\
    & 7  & $\infty$  &  127  &    64  &     11   &     9    &     8  &     8  &        &        &        &         &       \\
    & 8  & $\infty$  &  255  &   128  &     17   &    12    &     9  &     9  &     9  &        &        &         &       \\
    & 9  & $\infty$  &  511  &   256  &     23   &    18    &    11  &    10  &    10  &    10  &        &         &       \\
    & 10 & $\infty$  & 1023  &   512  &\em{34-37}&    24    &    15  &    12  &    11  &    11  &    11  &         &       \\
    & 11 & $\infty$  & 2047  &  1024  &\em{48-60}&\em{35-37}&    23  &    16  &    12  &    12  &    12  &    12   &       \\
    & 12 & $\infty$  & 4095  &  2048  &\em{66-88}&\em{49-61}&    24  &    24  &    14  &    13  &    13  &    13   &   13  \\
\hline
\end{tabular*}
\medskip
\caption{Maximum lengths $n$ of OPS(n,k;q) codes for masking schemes
         given the number of masks $s$ and the order of probing security $q$.}
\label{Tab:MaskNo}
\end{table}
For probing security PS(q) with order $q > 3$
the task of finding the optimal code becomes already nontrivial.
Table~\ref{Tab:MaskNo} shows the upper bound on the length $n$ of the
optimum code given the number of masks $1 \le s \le 12$
and the probing security $1 \le q \le 12$.
The condensed information in Table~\ref{Tab:MaskNo} reflects various
sources---books, articles, as well as online data bases.
For some values only intervals for the
maximum possible length can be given,
because only lower and upper bounds for the minimum
distances of the corresponding codes are known.
These table entries are italicized.

Probing security of order PS(1) for an unbounded number of
information bits can already be achieved by introducing one mask
(Tab.~\ref{Tab:MaskNo}, column 1).
The maximum lengths for OPS(n,k;2) and OPS(n,k;3) masking schemes
are $2^{s}-1$ and $2^{s-1}$, respectively.
These are shown in columns 2 and 3.
The probing matrices correspond to the mentioned Hamming and Hsiao codes, respectively.
The trivial OPS(n,1;n-1) codes, which provide maximum probing security for one data bit,
are found on the diagonal line.
Some comments on a few more selected entries of the table are given:
The entry for $q=4,s=8$ corresponds to the $[17,9,5]$ quadratic
residue code which generates an OPS(17,9;4)
masking scheme. This scheme will be used as an example in the following section,
\cf also Appendix~\ref{App:OPS_Example3}.
The entries for $q=6,s=11$ and $q=7,s=12$ are
the $[23,12,7]$ Golay code $\mathcal G_{23}$ and
the $[24,12,8]$ Golay code $\mathcal G_{24}$, respectively.
It should be noted that the maximal length of a code for a given number of masks
decreases rapidly with increasing order of probing security.
Hence, achieving a high order of probing security in a circuit,
say $q \ge 4$, becomes inefficient in terms of the number of masks.
PS(2) and PS(3) masking schemes, however, are efficient and
may be of high practical relevance.

\section{Information leakage of OPS Masking schemes}\label{Sec:Leakage}
Let us consider a simple probing attack PA(q) on a masked circuit
for increasing order $q=1,2,\dots$.
Obviously, with an increasing number of probes $q$ the circuit
will leak more and more information.
For the moment we do not take into account the possibility that a differential
DPRA(q) attack might reveal the total information already for a small number of probes.
Furthermore, let us consider an intelligent attacker
who follows the optimum strategy in placing the probes.
The incremental information leakage for different masking schemes will
generally look different.
In Fig.~\ref{Fig:InformationLeakage} the information leakage of a circuit
protected by Vernam's OTP scheme ($\times$), \ie one mask per information bit,
an OPS(16,11;3) masking scheme with 5 mask bits ($\bigtriangledown$) and
an OPS(17,9;4) masking scheme with 8 mask bits ($\bigtriangleup$)
are compared.
\begin{figure}[thb]
\centering \scalebox{1.0}{\includegraphics{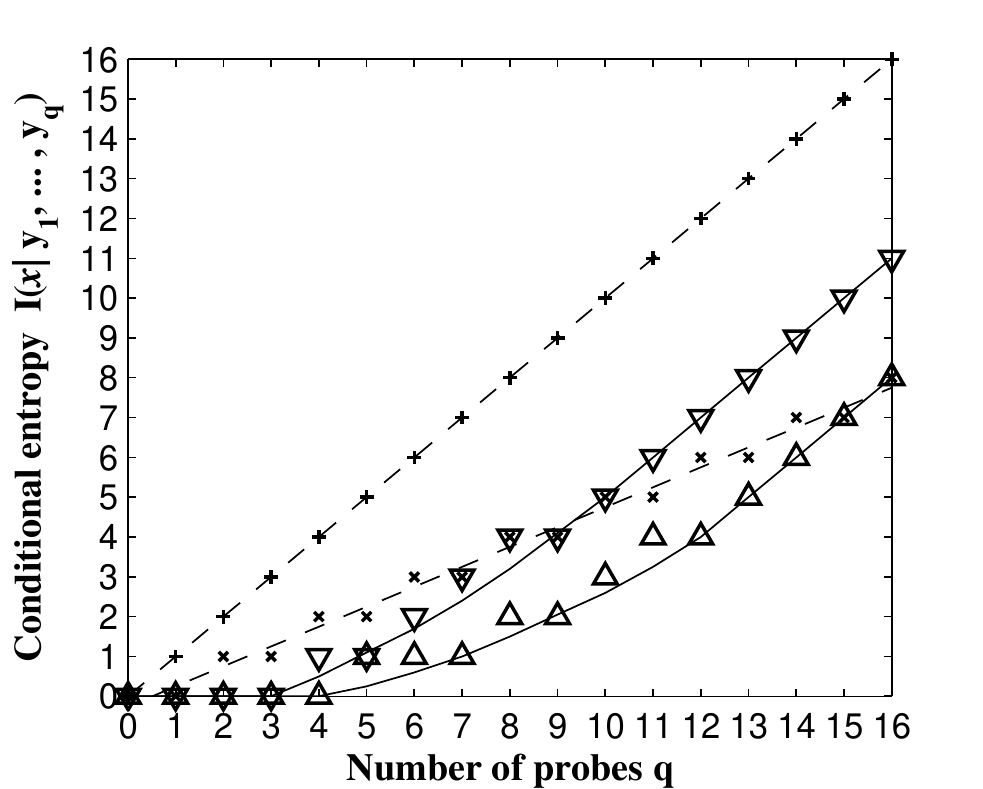}}
\caption{Information leakage of an OPS(16,11;3) code ($\bigtriangledown$) and
         and OPS(17,9;4) code ($\bigtriangleup$)
         in a probing attack with increasing number of needles.
         For comparison an unmasked circuit ($+$)
         and Vernam's masking scheme ($\times$) is also shown.
         The lines serve as a guide to the eye.}
\label{Fig:InformationLeakage}
\end{figure}
For reference purposes the information leakage of an
unmasked circuit ($+$) is also shown. The unmasked circuit
exhibits a constant information leakage rate of one bit per probe.
(We assume that the information on the wires is statistically independent.)
The Vernam masking scheme shows a constant average leakage
rate of 0.5 bit per probe, because the best attack strategy is to probe
one masked wire and the corresponding mask.
In contrast, the OPS masking schemes leak no information up to
the built-in level of probing security,
which is 3 and 4 needles, respectively, in our example.
For an increasing number of probes the information leakage
is still below that of the Vernam scheme.
The OPS(16,11;3) masking scheme reaches the
Vernam leakage rate at 7 probes, while the OPS(17,9;4) scheme
arrives at this rate not until 15 probes.
Asymptotically all OPS leakage rates converge
to a rate of one bit per probe like in an unmasked circuit.
Metaphorically speaking an OPS masking scheme
draws on a private credit to bravely resist attacks with a
moderate number of needles, but the scheme collapses,
if a critical charge is reached.

\section{Simultaneous probing and fault security, OTR-Codes}\label{Sec:OTR}
It is evident to ask for the generalization
of a probing secure circuit to a \emph{tamper-resistant} circuit, which
simultaneously preserves privacy and integrity.
Before further analysis we need to set up a precise fault attack model
for an attacker who can simultaneously mount probing and fault attacks.
An important subclass of physical fault attacks are \emph{forcing attacks}.
In this model we assume that a powerful attacker has full control
on the values of up to $f$ wires of his choice.
An attacker has various physical methods to perform
such a \emph{surgical}, \ie local attack.
A powerful attacker could place $f$ probes and overdrive the signal
values on the wire by applying the appropriate electric potential.
The work factor for such an attack resembles that of a probing attack.
In a weaker, \ie less controlled forcing attack (of statistical nature),
pulses of electromagnetic radiation (e.g. from one or more lasers)
could be used to flip signal values within some radius of influence.
The faulty signal
values will follow some probability distribution
(depending on several physical parameters).
In this attack the work factor for the setup will be smaller.
However, the attacker will have to repeat the attack many
times to generate an error vector that will lead to the intended
information leakage.
If we can assume that only up to $f$ bits are modified,
this attack is also covered by our model (random forcing attack).
\BDEF
In a \textbf{Forcing Attack of Order f, FRA(f),} an adversary
is able to force values $(x_{1,t}, x_{2,t}, \dots, x_{f,t})$
on $f$ wires of his choice in a circuit to 0 or 1
for an arbitrary number of
evaluation cycles $t = 1, 2, \dots, T$.
That means he is able to imprint a (possibly changing)
vector of values $\V e_t=(e_{1,t}, e_{2,t}, \dots, e_{f,t})$
on a subset of $f$ wires in subsequent evaluation cycles $t$.
\EDEF
\BDEF
A circuit is \textbf{forcing secure of order f}, we write \textbf{FRS(f)},
if every error in a forcing attack of order $f$ can be detected.
\label{Def:ForcingSecurity}
\EDEF
It should be noted, that with  Definition~\ref{Def:ForcingSecurity}
we stress the importance of error
detection as a precondition to a reaction on the error.
We do not impose any restriction on the decision
whether the circuit should be constructed such that
the error can be corrected or whether the circuit should enter
a secure state. The latter could be an irreversible transition to a state in
which the secrets are deleted and the circuit is inoperative (self-destruction).

We can now proceed, in the spirit of Definition~\ref{Def:OPS},
to develop an encoding scheme for intermediary state variables,
\eg the wires of a circuit,
which provides security against simultaneous probing and fault attacks.
\BDEF
An \textbf{Optimal Tamper Resistant Code, OTR(n,k,j;f,q),}
is a linear block code of length $n$,
dimension $k$, and $j$ information bits,
which is simultaneously forcing secure of order $f$
and probing secure of order $q$, \ie PS(q) and FRS(f).
The number of redundancy bits $r = n-k$ and
the number of mask bits $s = k-j$ are minimal.
\EDEF
In the second central theorem the canonical form of the OTR code is given
and necessary conditions for the existence are derived.
%
\begin{theorem}
W.l.o.g. the canonical shape of the generator matrix of an OTR code
can be written in the form
\BE
 \V G =
  \left(\begin{array}{c|c|c}
    \V I_{j}    &  \V O      & \V S_{j,r}  \\
    \hline
    \V Q_{s,j}  &  \V I_{s}  & \V R_{s,r}
  \end{array}\right),
\label{Eq:DefOTR}
\EE
where the probing matrix (\cf Eqn.\ \ref{Eq:ProbingMatrix}) is given by
\BE
   \V P^q_{s,n} = (\V Q_{s,j} | \V I_s | \V R_{s,r}).
\label{Eq:ProbMatOTR}
\EE
The code is OTR(n,k,j;f,q), \ie simultaneously PS(q) and FRS(f),
if and only if the following three conditions hold:
\begin{enumerate}
\item The parity check matrix of the code is given by
      \BE
         \V H^f_{r,k} = (\V S^T_{r,j} | \V R^T_{r,s} - \V S^T_{r,j} \V Q^T_{j,s}  | \V I_r).
         \label{Eq:ParMatOTR}
       \EE
\item Any $q$ columns of $\V P^q_{s,n}$ are linearly independent.
\item Any $f$ columns of $\V H^f_{r,k}$ are linearly independent.
\end{enumerate}
\end{theorem}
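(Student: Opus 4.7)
My plan is to split the theorem into three essentially independent sub-claims: the canonical form (\ref{Eq:DefOTR}), the formula for the parity-check matrix (condition~1), and the equivalence of PS(q) and FRS(f) with the column-independence conditions (2) and (3). The canonical form follows from the same elementary row and column operations used for OPS codes: I would permute coordinates so that information bits come first, masks second, and redundancy last, then use row operations to clear everything below the identity in the information block and to split off the mask identity block; the three residual rectangular blocks $\V S_{j,r}$, $\V Q_{s,j}$ and $\V R_{s,r}$ absorb all remaining freedom. The parity-check matrix then drops out mechanically: writing $\V H=(A\mid B\mid \V I_r)$ in systematic form and solving $\V G\V H^T=\V 0$, the information-rows equation forces $A=\V S^T_{r,j}$ and the mask-rows equation forces $B=\V R^T_{r,s}+\V S^T_{r,j}\V Q^T_{j,s}$, which matches (\ref{Eq:ParMatOTR}) because $+$ and $-$ coincide over $\F_2$.

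For the security half I would treat FRS(f) and PS(q) separately. FRS(f) is the standard coding-theoretic statement: a forcing attack on at most $f$ wires modifies the transmitted codeword $\V y$ into $\V y+\V e$ with Hamming weight $\le f$, and is detected exactly when $\V H^f_{r,k}\V e^T\ne \V 0$; detection of every nonzero such $\V e$ is equivalent to minimum distance $\ge f+1$, which in turn is equivalent to any $f$ columns of $\V H^f_{r,k}$ being linearly independent. PS(q) is a direct lift of Theorem~\ref{Th:ProbingMatrix} to the rectangular OTR setting. Fixing $q$ probed output coordinates, the identity blocks in the canonical generator let me write the probe vector as $\V z = \V x\cdot C + \V m\cdot D$, where $C$ is the corresponding restriction of the information rows of $\V G$ and $D$ is the $s\times q$ restriction of $\V P^q_{s,n}$ to the chosen columns. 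With $\V m$ uniform on $\F_2^s$ and independent of $\V x$, the conditional distribution of $\V z$ given $\V x$ is the image of $\V m\mapsto \V m\cdot D$ shifted by $\V x\cdot C$; this is uniform on $\F_2^q$ iff $\operatorname{rank}(D)=q$, i.e.\ iff the $q$ selected columns of $\V P^q_{s,n}$ are linearly independent. When that holds, $\V z$ is independent of $\V x$ and $I(\V x;\V z)=0$; when it fails, the obstruction argument of Theorem~\ref{Th:ProbingMatrix} produces a nontrivial $\F_2$-combination of probes that equals a nonzero linear function of $\V x$, violating PS(q).

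The step I expect to be the main technical obstacle is this PS(q) direction, because the proof of Theorem~\ref{Th:ProbingMatrix} was written for a square generator ($N=n$) and the $r$ extra redundancy coordinates in an OTR code have to be accounted for. The clean observation that makes the transfer work is that each redundancy coordinate is still an affine function of the masks whose coefficients appear in the rightmost block of $\V P^q_{s,n}$, so all mask contributions to any subset of probed wires are captured by the same $s\times q$ restriction $D$ of $\V P^q_{s,n}$; the kernel/image dimension bookkeeping $\dim V=s$, $\dim\operatorname{Im}(\phi)=q$, $\dim\operatorname{ker}(\phi)=s-q$ then goes through verbatim and the coset-counting yields the required factorization $p(\V x,\V z)=p(\V x)p(\V z)$.
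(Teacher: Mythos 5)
Your proposal is correct and follows essentially the same route as the paper: establish the canonical generator by row/column operations, read off the systematic parity-check matrix from $\V G\V H^T=\V 0$ (the paper goes in the reverse direction, from $\V H$ to $\V G$ via elementary row operations, which is equivalent), reduce FRS($f$) to the minimum-distance characterization of $\V H^f_{r,k}$, and reduce PS($q$) to Theorem~\ref{Th:ProbingMatrix}. Your explicit verification that the kernel/coset argument of Theorem~\ref{Th:ProbingMatrix} survives the passage to a rectangular generator with $r$ extra redundancy coordinates is a detail the paper dispatches with ``follows immediately,'' so you have if anything supplied more of the argument than the published proof does.
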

\begin{proof}
Given the parity check matrix \eqref{Eq:ParMatOTR}
and using \eqref{Eq:GenParRelation}
the corresponding generator matrix
\[
    \V G' = \left(\begin{array}{c|c|c}
                    \V I_{j}   &  \V O      & \V S_{j,r}  \\
                    \hline
                    \V O    &  \V I_{s}  & \V R_{s,r}-\V Q_{s,j} \V S_{j,r}
                    \end{array}\right)
\]
is obtained. We now transform $\V G'$ to an equivalent code $\V G$.
Multiplying the upper slice by $\V Q_{s,j}$ and adding the
result to the lower slice we arrive at \eqref{Eq:DefOTR}.
The second condition follows immediately from Theorem~\ref{Th:ProbingMatrix}.
The third condition follows trivially from the definition
of the minimum distance of a code.
\QED
\end{proof}
The construction of an OTR code for a given number of wires
and a given order of probing and forcing security,
\ie the triplet $(j,q,f)$, is a nontrivial task.
It corresponds to the problem of finding
a triplet of matrices $(\V Q_{s,j}, \V S_{j,r}, \V R_{s,r})$, such that
the corresponding probing \eqref{Eq:ProbMatOTR}
and parity check \eqref{Eq:ParMatOTR} matrices
simultaneously fulfil the constraints on
the minimum number of linearly independent columns.
The competing constraints on the probing and the parity check matrix
indicate again the \emph{duality of the privacy and
the integrity protection problem}, \cf Section~\ref{Sec:OPS}.

It is convenient to recall a theorem of Gilbert and Varshamov:
\begin{theorem} (\textbf{Gilbert \& Varshamov})
Let $l,m,n \in \mathds{N}$ with $l\le m\le n$.
There exists a binary $m\times n$ matrix with the property
that any $l$ columns are linearly independent,
if\,\,\,\,$\sum_{i=0}^{l-1}\binom{n-1}{i} < 2^m$.
\end{theorem}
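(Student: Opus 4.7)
The plan is to construct the required $m\times n$ matrix greedily, one column at a time, maintaining the invariant that after $k$ columns $c_1,\dotsc,c_k\in\F_2^m$ have been selected, every subset of at most $l$ of them is linearly independent. I will show that, under the stated inequality, the greedy step always succeeds, so the process runs all the way to $n$ columns.

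Suppose inductively that $c_1,\dotsc,c_{k-1}$ satisfy the invariant, and consider a candidate $c_k\in\F_2^m$. The extended list violates the invariant exactly when some nonempty subset $T$ of $\{c_1,\dotsc,c_{k-1},c_k\}$ of size at most $l$ is linearly dependent, i.e.\ sums to $\V 0$ over $\F_2$. By the induction hypothesis no such $T$ can avoid $c_k$, so we must have
\[
c_k \;=\; \sum_{c\in S} c
\]
for some $S\subseteq\{c_1,\dotsc,c_{k-1}\}$ with $|S|\le l-1$; the case $S=\emptyset$ correctly captures the forbidden value $c_k=\V 0$. Hence the set of forbidden vectors for $c_k$ has cardinality at most
\[
\sum_{i=0}^{l-1}\binom{k-1}{i},
\]
which is a monotone non-decreasing function of $k$. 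Its largest value over $k=1,\dotsc,n$ is attained at $k=n$ and equals $\sum_{i=0}^{l-1}\binom{n-1}{i}$.

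Whenever this quantity is strictly less than $|\F_2^m|=2^m$, the forbidden set does not exhaust $\F_2^m$, so at every step an admissible column exists and can be selected. Under the hypothesis $\sum_{i=0}^{l-1}\binom{n-1}{i}<2^m$ the greedy construction therefore produces $n$ columns $c_1,\dotsc,c_n$ satisfying the invariant, and the $m\times n$ matrix with these columns has the required property. There is no real obstacle in the argument: the only step requiring care is the bookkeeping of the forbidden set—in particular, remembering that the $i=0$ term of the binomial sum corresponds to the empty subset $S$ and forces $c_k\neq\V 0$—which is exactly what matches the binomial $\binom{n-1}{0}$ in the statement.
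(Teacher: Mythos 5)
The paper states this Gilbert--Varshamov bound as a recalled classical result and supplies no proof of its own, so there is no in-paper argument to compare against; your proof has to stand on its own, and it does. What you give is the standard Varshamov-style greedy construction: add columns one at a time, observe that by induction any new dependency among at most $l$ columns must involve the new column $c_k$, so $c_k$ is forbidden exactly when it equals the sum of some subset of at most $l-1$ previously chosen columns, and count at most $\sum_{i=0}^{l-1}\binom{k-1}{i}\le\sum_{i=0}^{l-1}\binom{n-1}{i}<2^m$ forbidden values. This is correct, and you handle the one point that is usually fumbled, namely that the $i=0$ term corresponds to the empty sum and rules out $c_k=\mathbf{0}$. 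The only phrasing worth tightening is ``linearly dependent, i.e.\ sums to $\mathbf{0}$'': over $\F_2$ a dependent set is one in which some \emph{nonempty subset} sums to zero, not necessarily the whole set; since you quantify over all subsets $T$ of size at most $l$, the distinction is absorbed and the argument is unaffected, but stating it via a minimal dependent subset $T'$ containing $c_k$ would make the step airtight.
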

Evidently, an OTR($n,k,j;f,q$) code fulfils the two Gilbert-Varshamov inequalities
\[
    \sum_{i=0}^{q-1} \binom{n-1}{i} < 2^{k-j}
        \quad\textrm{and}\quad
    \sum_{i=0}^{f-1} \binom{n-1}{i} < 2^{n-k}.
\]
Conversely, it is not obvious whether choosing the smallest
possible values of $s=k-j$ and $r=n-k$ independently
for each inequality does imply the existence of the OTR-code.
However, we observed experimentally that this was a sufficient condition
for all tested small parameters. Usually an even better code can be found.

For moderate values of $(j,q,f)$ OTR codes can
be efficiently constructed
using the following algorithm:
For $r$ and $s$ choose the smallest values according to Table~\ref{Tab:MaskNo}
(or more conservatively according to the Gilbert-Varshamov bound).
Choose the parity check matrix of a $[j+s+r,j+s,f+1]$ code.
Calculate the corresponding generator matrix.
Transform the generator matrix using elementary row operations
to the canonical form by taking care that any $q$ columns in
the parity check matrix are linearly independent.
If this constraint cannot be met
select another parity check matrix
and repeat the procedure. Increasing $s$ or $r$ will generally
increase the number of solutions.
Two examples for OTR codes
can be found
in appendix~\ref{App:OTR_Example1}: OTR(7,4,1;2,2) and
in appendix~\ref{App:OTR_Example2}: OTR(16,11,6;3,3).
It should be noted, that optimal solutions are usually
obtained for smaller values of $r$ and $s$ than indicated by Gilbert's theorem.
The OTR(16,11,6;3,3) code is such an example.

\section{Summary}
We have considered the problem of privacy and integrity protection
in cryptographic circuits in a white-box scenario
for a powerful, yet limited attacker.
By introducing a coding theoretical framework
we have shown that constructing an optimal masking scheme
(as a privacy protection method)
can be considered as the dual problem to finding an optimal error code
(as an integrity protection method).
The new formulation
unifies the known linear masking schemes and
allows us to find
lower bounds for the number of masks needed
to protect a circuit against $q$th order probing attacks.
In this attack scenario the information leakage of
the OPS code based masking
schemes is smaller
than that of Vernam's OTP scheme.
Finally, we considered combined probing and forcing attacks
and derived the structure of optimal linear tamper resistant codes (OTR),
which are eligible to preserve both, privacy and integrity,
in $q$th order probing and $f$th order forcing attacks.
A procedure for the construction of OTR codes has been proposed.

It is immediate that all linear structures of a cryptographic algorithm
can be efficiently protected by OPS and OTR codes.
Although the lower bounds given by the linear constructions
are still applicable for the nonlinear parts of an algorithm
a linear coding scheme generally does not propagate through
a nonlinear operation. Tamper protection of nonlinear
structures will, for example, necessitate the application
of extra masks and of nonlinear codes which are compatible
with the specific nonlinear operation.
This is a target for future analysis.

\clearpage

\bibliographystyle{plain}
\bibliography{../../../../paperdatabase/sec_innov}
%
\appendix
%
\section{Example for OPS(7,4;2) masking scheme}\label{App:OPS_Example1}
A number of $s=3$ mask bits provides probing security PS(2) for data words
of length $k=4$. The canonical probing matrix in the OPS(7,4;2)
masking scheme is a $[7,4,3]$ Hamming code.
\BE
\V P^2_{3,7}=
  \left(\begin{array}{cccc|ccc}
    1 & 1 & 0 & 1 & 1 & 0 & 0 \\
    1 & 0 & 1 & 1 & 0 & 1 & 0 \\
    0 & 1 & 1 & 1 & 0 & 0 & 1
  \end{array}\right).
\EE
In explicit terms, the corresponding generator matrix \eqref{Eq:ProbingMatrix}
induces the masking scheme
\[
    (x_1,\dots,x_4, m_1,m_2,m_3) \mapsto
    (x_1+m_1+m_2,x_2+m_1+m_3,x_3+m_2+m_3,x_4+m_1+m_2+m_3,m_1,m_2,m_3).
\]

\section{Example for OPS(16,11;3) masking scheme}\label{App:OPS_Example2}
A number of $s=5$ mask bits can provide probing security PS(3) for data words
of length $k=11$. The canonical probing matrix in the OPS(16,11;3)
masking scheme is a $[16,11,4]$ Hsiao code~\cite{Hsiao1970ClassofOptimal}.
\BE
\V P^3_{5,16} =
\left(\begin{array}{ccccccccccc|ccccc}
 1 & 1 & 1 & 1 & 1 & 1 & 0 & 0 & 0 & 0 & 1  &  1 & 0 & 0 & 0 & 0 \\
 1 & 1 & 1 & 0 & 0 & 0 & 1 & 1 & 1 & 0 & 1  &  0 & 1 & 0 & 0 & 0 \\
 1 & 0 & 0 & 1 & 1 & 0 & 1 & 1 & 0 & 1 & 1  &  0 & 0 & 1 & 0 & 0 \\
 0 & 1 & 0 & 1 & 0 & 1 & 1 & 0 & 1 & 1 & 1  &  0 & 0 & 0 & 1 & 0 \\
 0 & 0 & 1 & 0 & 1 & 1 & 0 & 1 & 1 & 1 & 1  &  0 & 0 & 0 & 0 & 1
\end{array}\right).
\EE

\section{Example for OPS(17,9;4) masking scheme}\label{App:OPS_Example3}
The $[17,9,5]$ quadratic residue code,
\cf\cite{MacWilliams2006TheoryofError-Correcting},
generates an OPS(17,9;4) masking scheme.
Using the generator polynomial $x^8+x^5+x^4+x^3+1$ the following
canonical probing matrix is obtained.
\BE
\V P^4_{9,17} =
\left(\begin{array}{ccccccccc|cccccccc}
 1 & 0 & 0 & 1 & 1 & 1 & 1 & 0 & 0 &  1 & 0 & 0 & 0 & 0 & 0 & 0 & 0 \\
 0 & 1 & 0 & 0 & 1 & 1 & 1 & 1 & 0 &  0 & 1 & 0 & 0 & 0 & 0 & 0 & 0 \\
 0 & 0 & 1 & 0 & 0 & 1 & 1 & 1 & 1 &  0 & 0 & 1 & 0 & 0 & 0 & 0 & 0 \\
 1 & 0 & 0 & 0 & 1 & 1 & 0 & 1 & 1 &  0 & 0 & 0 & 1 & 0 & 0 & 0 & 0 \\
 1 & 1 & 0 & 1 & 1 & 0 & 0 & 0 & 1 &  0 & 0 & 0 & 0 & 1 & 0 & 0 & 0 \\
 1 & 1 & 1 & 1 & 0 & 0 & 1 & 0 & 0 &  0 & 0 & 0 & 0 & 0 & 1 & 0 & 0 \\
 0 & 1 & 1 & 1 & 1 & 0 & 0 & 1 & 0 &  0 & 0 & 0 & 0 & 0 & 0 & 1 & 0 \\
 0 & 0 & 1 & 1 & 1 & 1 & 0 & 0 & 1 &  0 & 0 & 0 & 0 & 0 & 0 & 0 & 1
\end{array}\right).
\EE

\section{Example for OTR(7,4,1;2,2) tamper resistant code}\label{App:OTR_Example1}
To achieve a forcing security of order 2, FRS(2),
we start with the parity check matrix of a $[7,4,3]$ Hamming code.
The distance of this code is $\dmin = 3$ and the number of redundancy
bits is $r = n - k = 3$.
\BE
\V H =
  \left(\begin{array}{cccc|ccc}
    1 & 1 & 0 & 1 & 1 & 0 & 0 \\
    1 & 0 & 1 & 1 & 0 & 1 & 0 \\
    0 & 1 & 1 & 1 & 0 & 0 & 1
  \end{array}\right).
\EE
As given by Tab.~\ref{Tab:MaskNo} a number of $s = 3$
masks bits is required to achieve PS(2) for an OPS code of length $n = 7$.
Hence a maximum of $j = k-s = 1$
information bits can be protected.
The canonical generator matrix can be easily constructed
by applying elementary row operations:
\BE
\V G =
  \left(\begin{array}{c|ccc|ccc}
    1 & 0 & 0 & 0 & 1 & 1 & 0 \\
    \hline
    1 & 1 & 0 & 0 & 0 & 1 & 1 \\
    1 & 0 & 1 & 0 & 1 & 0 & 1 \\
    0 & 0 & 0 & 1 & 1 & 1 & 1
  \end{array}\right).
\EE
It is immediate that any two columns in the probing matrix
(lower part of $\V G$) are linearly independent.
Hence this OTR code is PS(2).

\section{Example for OTR(16,11,6;3,3) tamper resistant code}\label{App:OTR_Example2}
In this nontrivial example
we use a minimum weight Hsiao code ($\dmin=4$)
of length $n=16$ and dimension $k=11$
as a starting point to achieve FRS(3),
\BE
\V H =
  \left(\begin{array}{ccccccccccc|ccccc}
  1 & 1 & 1 & 1 & 1 & 1 & 0 & 0 & 0 & 0 & 1 &  1 & 0 & 0 & 0 & 0\\
  1 & 1 & 1 & 0 & 0 & 0 & 1 & 1 & 1 & 0 & 1 &  0 & 1 & 0 & 0 & 0\\
  1 & 0 & 0 & 1 & 1 & 0 & 1 & 1 & 0 & 1 & 1 &  0 & 0 & 1 & 0 & 0\\
  0 & 1 & 0 & 1 & 0 & 1 & 1 & 0 & 1 & 1 & 1 &  0 & 0 & 0 & 1 & 0\\
  0 & 0 & 1 & 0 & 1 & 1 & 0 & 1 & 1 & 1 & 1 &  0 & 0 & 0 & 0 & 1
 \end{array}\right).
\EE
From Tab.\ \ref{Tab:MaskNo} we see that $s=5$ masks are necessary
to secure $n=16$ bits against a probing attack of order 3.
Applying elementary row operations the generator matrix of
an equivalent PS(3)-secure code can be constructed:
\BE
\V G =
  \left(\begin{array}{cccccc|ccccc|ccccc}
  1 & 0 & 0 & 0 & 0 & 0 &  0 & 0 & 0 & 0 & 0 &  1 & 1 & 1 & 0 & 0  \\
  0 & 1 & 0 & 0 & 0 & 0 &  0 & 0 & 0 & 0 & 0 &  1 & 1 & 0 & 1 & 0  \\
  0 & 0 & 1 & 0 & 0 & 0 &  0 & 0 & 0 & 0 & 0 &  1 & 1 & 0 & 0 & 1  \\
  0 & 0 & 0 & 1 & 0 & 0 &  0 & 0 & 0 & 0 & 0 &  1 & 0 & 1 & 1 & 0  \\
  0 & 0 & 0 & 0 & 1 & 0 &  0 & 0 & 0 & 0 & 0 &  1 & 0 & 1 & 0 & 1  \\
  0 & 0 & 0 & 0 & 0 & 1 &  0 & 0 & 0 & 0 & 0 &  1 & 0 & 0 & 1 & 1  \\
  \hline
  1 & 1 & 1 & 1 & 1 & 0 &  1 & 0 & 0 & 0 & 0 &  1 & 0 & 0 & 1 & 0  \\
  1 & 0 & 1 & 0 & 0 & 1 &  0 & 1 & 0 & 0 & 0 &  1 & 1 & 0 & 1 & 1  \\
  1 & 0 & 0 & 1 & 1 & 1 &  0 & 0 & 1 & 0 & 0 &  0 & 0 & 1 & 1 & 1  \\
  1 & 1 & 1 & 0 & 1 & 0 &  0 & 0 & 0 & 1 & 0 &  0 & 1 & 1 & 0 & 1  \\
  1 & 1 & 0 & 1 & 0 & 1 &  0 & 0 & 0 & 0 & 1 &  1 & 1 & 1 & 0 & 0
\end{array}\right).
\EE
This OTR code can secure $j = k - s = 6$ bits of information
simultaneously against FRA(3) and PA(3) attacks.

\end{document}